\title{Sublinear Time Nearest Neighbor Search over Generalized Weighted Manhattan Distance}
\author{%
  Huan Hu \\
  Harbin Institute of Technology, China\\
  \texttt{hit\_huhuan@foxmail.com}\\
  \And
  Jianzhong Li \\
  Harbin Institute of Technology, China \\
  \texttt{lijzh@hit.edu.cn} \\
}
\begin{document}

\maketitle

\begin{abstract}
  Nearest Neighbor Search (NNS) over generalized weighted distances is fundamental to a wide range of applications. 
  The problem of NNS over the generalized weighted square Euclidean distance has been studied in previous work.
  However, numerous studies have shown that the Manhattan distance could be more effective than the Euclidean distance for high-dimensional NNS, which indicates that the generalized weighted Manhattan distance is possibly more practical than the generalized weighted square Euclidean distance in high dimensions.
  To the best of our knowledge, 
  no prior work solves the problem of NNS over the generalized weighted Manhattan distance in sublinear time. 
  This paper achieves the goal by proposing two novel hashing schemes ($d_w^{l_1},l_2$)-ALSH and ($d_w^{l_1},\theta$)-ALSH. 
\end{abstract}
\section{Introduction}
\label{introduction}
Nearest Neighbor Search (NNS) over generalized weighted distances is fundamental to a wide variety of applications, such as personalized recommendation \cite{DBLP:conf/www/GuZHS16,DBLP:conf/iccci/HwangST10,DBLP:conf/sigir/McAuleyTSH15} and kNN classification \cite{DBLP:journals/pr/BhattacharyaGC17,DBLP:journals/tkde/MoreoES20}.
Given a set of $n$ data points $D\subset\mathbb{R}^d$ and a query point $q\in\mathbb{R}^d$ with a weight vector $w\in\mathbb{R}^d$, NNS over a generalized weighted distance, denoted by $d_w$, is to find a point $o^*\in D$ such that $o^*$ is the closest data point to $q$ for $d_w$. Formally, the goal of NNS over $d_w$ is to return
\begin{equation}
o^*=\arg\min_{o\in D}d_w(o,q).
\end{equation}
Note that the weight vector $w$ is specified along with $q$ rather than pre-specified. Moreover, each element of $w$ can be either positive or non-positive.

The generalized weighted Manhattan distance, denoted by $d_w^{l_1}$, and the generalized weighted square Euclidean distance, denoted by $d_w^{l_2}$, are two typical generalized weighted distances which are derived from the Manhattan distance and the Euclidean distance, respectively. For any two points $o=(o_1,o_2,\ldots,o_d)\in\mathbb{R}^d$ and $q=(q_1,q_2,\ldots,q_d)\in\mathbb{R}^d$, the distances $d_w^{l_1}(o,q)$ and $d_w^{l_2}(o,q)$ are respectively computed as follows:
\begin{equation}
\begin{split}
d_w^{l_1}(o,q)=&\sum_{i=1}^{d}w_i\left|o_i-q_i\right|\\
d_w^{l_2}(o,q)=&\sum_{i=1}^{d}w_i\left(o_i-q_i\right)^2,
\end{split}
\end{equation}
where $w=(w_1,w_2,\ldots,w_d)\in\mathbb{R}^d$.
A recent paper \cite{DBLP:conf/icml/LeiHKT19} studied the problem of NNS over $d_w^{l_2}$ and provided two sublinear time solutions for it. However, to the best of our knowledge, there is no prior work that solves the problem of NNS over $d_w^{l_1}$ in sublinear time.
Actually, plenty of studies \cite{DBLP:conf/icdt/AggarwalHK01,DBLP:conf/vldb/HinneburgAK00} have shown that the Manhattan distance could be more effective than the Euclidean distance for producing meaningful NNS results in high-dimensional spaces. It indicates
that NNS over $d_w^{l_1}$ is possibly more practical than NNS over $d_w^{l_2}$ in many real scenarios. In this paper, we target to propose sublinear time methods for efficiently solving the problem of NNS over $d_w^{l_1}$.

As a matter of fact, existing methods can not handle NNS over $d_w^{l_1}$ well. 
Specifically, the brute-force linear scan scales linearly with data size and thus may yield unsatisfactory performance.
The conventional spatial index-based methods \cite{DBLP:journals/cacm/Bentley75,DBLP:journals/sigmod/CheungF98,samet2006foundations} can only perform well for NNS in low dimensions due to the ``curse of dimensionality'' \cite{DBLP:reference/db/Chen09}. 
Locality-Sensitive Hashing (LSH) \cite{DBLP:journals/corr/WangSSJ14} is a popular approach for approximate NNS and exhibits good performance for high-dimensional cases. In the literature, a number of efficient LSH schemes \cite{DBLP:conf/compgeom/DatarIIM04,DBLP:conf/sigmod/GanFFN12,DBLP:conf/vldb/GionisIM99,DBLP:journals/vldb/HuangFFNW17,DBLP:conf/stoc/IndykM98,DBLP:journals/pvldb/LuWWK20,DBLP:journals/tods/TaoYSK10,DBLP:journals/pvldb/ZhengZWHLJ20} have been proposed based on LSH families, and some of them can answer NNS queries even in sublinear time. Unfortunately, they can not be applied to answer the NNS queries over $d_w^{l_1}$ unless $w$ is fixed to an all-1 vector.

Recently, Asymmetric Locality-Sensitive Hashing (ALSH) was extended from LSH so that the problems of Maximum Inner Product Search (MIPS) and NNS over $d_w^{l_2}$ can be addressed in sublinear time \cite{DBLP:conf/icml/LeiHKT19,DBLP:conf/alt/NeyshaburMS14,DBLP:conf/icml/NeyshaburS15,DBLP:conf/nips/Shrivastava014,DBLP:conf/uai/Shrivastava015}. An ALSH scheme relies on an ALSH family. As far as we know, there is no ALSH family proposed for NNS over $d_w^{l_1}$ in previous work. 
To provide sublinear time solutions for NNS over $d_w^{l_1}$ in this paper, we follow the ALSH approach to propose ALSH schemes by introducing ALSH families that are suitable for NNS over $d_w^{l_1}$.


\textbf{Outline.} In Section \ref{preliminaries}, we review the approaches of LSH and ALSH. In Section \ref{negative results}, we show that there is no LSH or ALSH family for NNS over $d_w^{l_1}$ over the entire space $\mathbb{R}^d$ and there is no LSH family for NNS over $d_w^{l_1}$ over the bounded spaces in $\mathbb{R}^d$. 
Then we seek to find ALSH families for NNS over $d_w^{l_1}$ over the bounded spaces in $\mathbb{R}^d$. As a result, we propose two suitable ALSH families and further obtain two sublinear time ALSH schemes ($d_w^{l_1},l_2$)-ALSH and ($d_w^{l_1},\theta$)-ALSH in Section \ref{our solution}. 
\section{Preliminaries}
\label{preliminaries}
Before introducing our proposed solutions to the problem of NNS over $d_w^{l_1}$, we first present the preliminaries on LSH and ALSH.
\subsection{Locality-Sensitive Hashing}
\label{lsh}
Let $d(\cdot,\cdot)$ be a distance function and $\mathcal{Z}$ be the space where $d(\cdot,\cdot)$ is defined.
Assume that data points and query points are located in $\mathcal{X}\subseteq\mathcal{Z}$ and $\mathcal{Y}\subseteq\mathcal{Z}$, respectively.
Then, an LSH family is formally defined as follows.
\newtheorem{definition}{Definition}
\begin{definition}[LSH Family]
	\label{lsh family}
	An LSH family $\mathcal{H}_{(h)}=\{h:\mathcal{Z}\rightarrow BucketIDs\}$ is called $(R_1,R_2,P_1,P_2)$-sensitive if for any $o\in\mathcal{X}$ and $q\in\mathcal{Y}$, the following conditions are satisfied:
	\vspace{-0.3cm}
	\begin{itemize}
		\setlength{\parskip}{2pt} \setlength{\itemsep}{0pt plus 1pt}
		\item If $d(o,q)\leq R_1$, then Pr$[h(o)=h(q)]\geq P_1$;
		\item If $d(o,q)\geq R_2$, then Pr$[h(o)=h(q)]\leq P_2$;
		\item $R_1<R_2$ and $P_1>P_2$.
	\end{itemize}
	\vspace{-0.3cm}
\end{definition}
As we can see from Definition \ref{lsh family}, an LSH family is essentially a set of hash functions that can hash closer points into the same bucket with higher probability. 
Thus, the basic idea of an LSH scheme is to use an LSH family to hash points such that only the data points that have the same hash code as the query point are likely to be retrieved to find approximate nearest neighbors. 
In the following, we review two popular LSH families that were proposed for the $l_2$ distance (a.k.a. the Euclidean distance) and the Angular distance, respectively.

The $l_2$ distance between any two points $o,q\in\mathbb{R}^d$ is computed as $d^{l_2}(o,q)=\Vert o-q\Vert_2$, where $\Vert\cdot\Vert_2$ is the $l_2$-norm of a vector.
The LSH family proposed for the $l_2$ distance in \cite{DBLP:conf/compgeom/DatarIIM04} is
$\mathcal{H}_{(h^{l_2})}=\{h^{l_2}:\mathbb{R}^d\rightarrow \mathbb{Z}\}$, where
\begin{equation}
\label{lp/e2lsh}
h^{l_2}(x)=\lfloor\frac{a^Tx+b}{w}\rfloor,
\end{equation}
$a$ is a $d$-dimensional vector where each element is chosen independently from the standard normal distribution, $b$ is a real number chosen uniformly at random from $[0,w]$, and $w$ is a user-specified positive constant. Let $r=d^{l_2}(o,q)$. The collision probability function is
\begin{equation}
\label{collisonl2}
P^{l_2}(r)=Pr[h^{l_2}(o)=h^{l_2}(q)]=1-2\Phi(-w/r)-\frac{2}{\sqrt{2\pi}(w/r)}(1-e^{-(w^2/2r^2)}),
\end{equation}
where $\Phi(\cdot)$ is the cumulative distribution function of the standard normal distribution \cite{DBLP:conf/compgeom/DatarIIM04}.

The Angular distance between any two points $o,q\in\mathbb{R}^d$ is computed as $d^{\theta}(o,q)=\arccos(\frac{o^T q}{\Vert o\Vert_2\Vert q\Vert_2})$.
The LSH family proposed for the Angular distance in \cite{DBLP:conf/stoc/Charikar02} is
$\mathcal{H}_{(h^{\theta})}=\{h^{\theta}:\mathbb{R}^d\rightarrow \{0,1\}\}$, where
\begin{equation}
\label{theta/e2lsh}
h^{\theta}(x)=\begin{cases}
0&\text{if}\ a^Tx<0\\
1&\text{if}\ a^Tx\geq0
\end{cases}
\end{equation}
and $a$ is a $d$-dimensional vector where each element is chosen independently from the standard normal distribution. Let $r=d^{\theta}(o,q)$. The collision probability function is
\begin{equation}
\label{collisontheta}
P^{\theta}(r)=Pr[h^{\theta}(o)=h^{\theta}(q)]=1-\frac{r}{\pi}.
\end{equation}
\subsection{Asymmetric Locality-Sensitive Hashing}
Recent studies have shown that ALSH is an effective approach for solving the problems of MIPS and NNS over $d_w^{l_2}$ \cite{DBLP:conf/icml/LeiHKT19,DBLP:conf/icml/NeyshaburS15,DBLP:conf/nips/Shrivastava014,DBLP:conf/uai/Shrivastava015}. 
An ALSH scheme processes NNS queries in a similar way to an LSH scheme.
It relies on an ALSH family.
Formally, the definition of an ALSH family is as follows.
\begin{definition}[ALSH Family]
	\label{alsh family}
	An ALSH family $\mathcal{H}_{(f,g)}=\{f:\mathcal{Z}\rightarrow BucketIDs\}\bigcup\{g:\mathcal{Z}\rightarrow BucketIDs\}$ is called $(R_1,R_2,P_1,P_2)$-sensitive if for any data point $o\in\mathcal{X}$ and query point $q\in\mathcal{Y}$, the following conditions are satisfied:
	\vspace{-0.3cm}
	\begin{itemize}
		\setlength{\parskip}{2pt} \setlength{\itemsep}{0pt plus 1pt}
		\item If $d(o,q)\leq R_1$, then Pr$[f(o)=g(q)]\geq P_1$;
		\item If $d(o,q)\geq R_2$, then Pr$[f(o)=g(q)]\leq P_2$;
		\item $R_1<R_2$ and $P_1>P_2$. 
	\end{itemize}
	\vspace{-0.3cm}
\end{definition}

From Definition \ref{alsh family}, we can see that an ALSH family $\mathcal{H}_{(f,g)}$ consists of a set of hash functions $\{f:\mathcal{Z}\rightarrow BucketIDs\}$ for data points and a set of hash functions $\{g:\mathcal{Z}\rightarrow BucketIDs\}$ for query points, and it ensures that each query point can collide with closer data points with higher probability.
In practice, $\mathcal{H}_{(f,g)}$ is often implemented with an LSH family $\mathcal{H}_{(h')}=\{h':\mathcal{Z}'\rightarrow BucketIDs\}$ and two vector functions called Preprocessing Transformation $P:\mathcal{X}\rightarrow \mathcal{X}'$ and Query Transformation $Q:\mathcal{Y}\rightarrow \mathcal{Y}'$ respectively \cite{DBLP:conf/icml/LeiHKT19,DBLP:conf/icml/NeyshaburS15,DBLP:conf/nips/Shrivastava014,DBLP:conf/uai/Shrivastava015} (here, $\mathcal{X}'\subseteq\mathcal{Z}'$ and $\mathcal{Y}'\subseteq\mathcal{Z}'$). 
Thus, the hash value of each data point $o\in\mathcal{X}$ is computed as $f(o)=h'(P(o))$ and the hash value of each query point $q\in\mathcal{Y}$ is computed as $g(q)=h'(Q(q))$.

Fundamentally, both LSH and ALSH schemes obtain approximate nearest neighbors by efficiently solving the ($R_1,R_2$)-Near Neighbor Search (($R_1,R_2$)-NNS) problem as follows.
\begin{definition}[(\boldmath$R_1,R_2$)-NNS]
	\label{$(R_1,R_2)$-NNS}
	Given a distance function $d(\cdot,\cdot)$, two distance thresholds $R_1$ and $R_2$ ($R_1<R_2$) and a data set $D\subset\mathcal{X}$, for any query point $q\in\mathcal{Y}$, 
	the $(R_1,R_2)$-NNS problem is to return a point $o\in D$ satisfying $d(o,q)\leq R_2$ if there exists a point $o'\in D$ satisfying $d(o',q)\leq R_1$.
\end{definition}
The theorem below indicates that the $(R_1,R_2)$-NNS problem can be solved with an LSH or ALSH scheme in sublinear time.
\newtheorem{theorem}{Theorem}
\begin{theorem}
	\label{sublinear}
	\cite{DBLP:conf/compgeom/DatarIIM04,DBLP:conf/icml/LeiHKT19,DBLP:conf/icml/NeyshaburS15,DBLP:conf/nips/Shrivastava014,DBLP:conf/uai/Shrivastava015} Given an $(R_1,R_2,P_1,P_2)$-sensitive LSH or ALSH family, one can construct a data structure for solving the problem of $(R_1,R_2)$-NNS with $O(n^{\rho}d\log n)$ query time and $O(n^{1+\rho})$ space, where $\rho=\frac{\log P_1}{\log P_2}<1$.
\end{theorem}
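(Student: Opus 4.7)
The plan is to follow the classical amplification argument for LSH and observe that it extends verbatim to ALSH. First, I would use the gap between $P_1$ and $P_2$ to build compound hash functions by concatenation: define $G_i(x) = (h_{i,1}(x),\ldots,h_{i,k}(x))$ using $k$ independently drawn hash functions from the given $(R_1,R_2,P_1,P_2)$-sensitive family. By independence, a near pair collides under $G_i$ with probability at least $P_1^k$, and a far pair collides with probability at most $P_2^k$. I would then build $L$ independent hash tables using $L$ such compound functions. In the ALSH case, the only syntactic change is that data points are inserted under $f$-compositions (i.e.\ after applying $P$) and the query is hashed under $g$-compositions (after applying $Q$); the collision bounds in Definition~2 are formally identical to those in Definition~1, so the analysis is unchanged.

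Second, I would set the two parameters so the probabilities come out correctly. Choosing $k = \lceil \log_{1/P_2} n \rceil$ gives $P_2^k \leq 1/n$, so in any single table the expected number of data points at distance $\geq R_2$ that collide with $q$ is at most $1$, and across all $L$ tables at most $L$. A direct computation with this $k$ yields $P_1^k \geq n^{-\rho}$ where $\rho = \log P_1/\log P_2$. Choosing $L = \lceil n^{\rho} \rceil$ then makes $1-(1-P_1^k)^L$ bounded below by a constant (roughly $1-1/e$), so if an $R_1$-close point exists it collides with $q$ in at least one table with constant probability.

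Third, I would describe the query and charge its cost. The procedure computes the $L$ compound hash values of $q$, costing $O(L k d) = O(n^\rho d \log n)$, then scans the points in the $L$ corresponding buckets, stopping after inspecting a fixed number (say $3L$) of candidates. Since the expected total number of far-point candidates across all tables is at most $L$, a Markov-style argument shows that exceeding the $3L$ cap happens with probability at most $1/3$, which only degrades the overall success probability by a constant factor; conditioning and repetition boost it back. Under the cap, candidate verification costs $O(Ld) = O(n^\rho d)$, so the total query time is $O(n^\rho d \log n)$. The $L$ tables store $n$ entries each, giving $O(nL) = O(n^{1+\rho})$ space.

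I expect the main obstacle to be the coupled choice of $k$ and $L$: they must be tuned so that (i) the per-table false-positive count stays $O(1)$, governed by $P_2^k$, and (ii) the per-query miss probability stays constant, governed by $(1-P_1^k)^L$, and both constraints must be satisfied simultaneously at a value of $L$ that is still sublinear in $n$. The identity $P_1^k = n^{-\rho}$ is exactly what makes these two constraints compatible and forces the exponent $\rho = \log P_1/\log P_2$, which by assumption is strictly less than $1$. Beyond this balancing step, the remaining work, including the Markov-type bound on the candidate cap and the syntactic lifting from LSH to ALSH via the transformations $P$ and $Q$, is essentially bookkeeping.
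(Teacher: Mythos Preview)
Your argument is correct and is precisely the classical Indyk--Motwani/Datar--Immorlica--Indyk--Mirrokni amplification proof, together with the observation that replacing symmetric hashes $h$ by the asymmetric pair $(f,g)$ changes nothing in the analysis. Note, however, that the paper does not actually prove this theorem: it is stated with citations to \cite{DBLP:conf/compgeom/DatarIIM04,DBLP:conf/icml/LeiHKT19,DBLP:conf/icml/NeyshaburS15,DBLP:conf/nips/Shrivastava014,DBLP:conf/uai/Shrivastava015} and used as a black box, so your proposal simply reconstructs the standard proof from those references rather than diverging from anything the paper itself does.
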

\section{Negative Results}
\label{negative results}
In this section, we present some negative results on the existence of LSH and ALSH families for NNS over $d_w^{l_1}$.

The following theorem indicates that it is impossible to find an LSH or ALSH family for NNS over $d_w^{l_1}$ over $\mathcal{X}=\mathcal{Y}=\mathbb{R}^d$ ($d\geq3$).
\newenvironment{proof}{{\noindent\it Proof.}\ }{\hfill $\square$\par}
\begin{theorem}
	\label{no lsh/alsh over Rd}
	For any $d\geq3$, $R_1<R_2$ and $P_1>P_2$, there is no $(R_1,R_2,P_1,P_2)$-sensitive LSH or ALSH family for NNS over $d_w^{l_1}$ over $\mathcal{X}=\mathcal{Y}=\mathbb{R}^d$.
\end{theorem}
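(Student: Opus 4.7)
The plan is to derive a contradiction from a pair-level rigidity: on any fixed pair of distinct points the weighted Manhattan distance can be pushed both below $R_1$ and above $R_2$ merely by changing the weight vector, while the (A)LSH collision probability between the same two points stays unchanged. To make this precise, I would first pin down the rigidity. In Definitions 1 and 2 the hash functions $h$ (respectively $f$ and $g$) have signature $\mathcal{Z}\to BucketIDs$, and the theorem is stated with $\mathcal{Y}=\mathbb{R}^d$, so the weight vector $w$ is never fed to the query-side hash. Consequently, for any pair $(o,q)\in\mathbb{R}^d\times\mathbb{R}^d$, the collision probability
\[
p \;:=\; \Pr[h(o)=h(q)] \qquad (\text{resp.}\; p:=\Pr[f(o)=g(q)])
\]
is a single real number in $[0,1]$ depending only on $o$ and $q$.

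Next I would exploit the freedom in $w$. Choose any $o\neq q$ in $\mathbb{R}^d$ and any coordinate $i$ with $o_i\neq q_i$, and restrict attention to weight vectors supported on coordinate $i$; the distance then reduces to $d_w^{l_1}(o,q)=w_i|o_i-q_i|$, which attains every real value as $w_i$ varies over $\mathbb{R}$. Pick $w^{(1)}$ making this quantity equal to $R_1$ and $w^{(2)}$ making it equal to $R_2$. Applying the first sensitivity clause to $(o,q)$ under weight $w^{(1)}$ yields $p\ge P_1$; applying the second clause to the same pair under weight $w^{(2)}$ yields $p\le P_2$; together they force $P_1\le P_2$, contradicting $P_1>P_2$ and ruling out any such family.

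The main obstacle is conceptual rather than computational: one must confirm that a single hash family has to satisfy the sensitivity conditions uniformly across all weight vectors a query can carry, so that the same pair $(o,q)$ is legitimately subject to both clauses. This uniform requirement is precisely what the hypothesis $\mathcal{Y}=\mathbb{R}^d$ encodes, since it forbids the query-side transformation from folding $w$ into the query point itself. Once that is granted, the one-coordinate rescaling above disposes of the LSH and ALSH cases simultaneously, and I do not foresee any genuinely hard step; the dimension hypothesis $d\ge 3$ is used only as a convenient ambient setting, with the contradiction going through as soon as one coordinate of $o$ and $q$ can be chosen to differ.
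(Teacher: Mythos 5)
Your argument works for the LSH half of the claim (and is essentially the paper's proof of the separate theorem about bounded spaces), but it fails for the ALSH half, which is the substantive part of this theorem. The gap is your rigidity premise: you assert that because $\mathcal{Y}=\mathbb{R}^d$, ``the weight vector $w$ is never fed to the query-side hash.'' That is not what the ALSH definition says. In an ALSH family the query-side functions $g$ are applied at query time, when $w$ is available, and they are allowed to depend on $w$; indeed the paper's own positive constructions use $g(q)=h'(Q_w(q))$ with a query transformation $Q_w$ that explicitly folds in $w$. Only the data-side functions ($h$ for LSH, $f$ for ALSH) must be weight-oblivious, since data are hashed before queries arrive. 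Consequently, for a fixed pair $(o,q)$ the collision probability $\Pr[f(o)=g(q)]$ is \emph{not} a single number independent of $w$, and your two sensitivity clauses applied under $w^{(1)}$ and $w^{(2)}$ constrain two different probabilities; no contradiction follows. A quick sanity check that your argument proves too much: the same one-coordinate rescaling is available over a bounded domain, so if the argument were valid it would also rule out the $(d_w^{l_1},l_2)$- and $(d_w^{l_1},\theta)$-ALSH families that the paper constructs and proves sensitive.

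The paper's actual proof has to work much harder precisely because of this. It fixes a \emph{single} weight vector and uses the unboundedness of $\mathbb{R}^3$ to build $N$ data points and $N$ query points whose close/far pattern forms a triangular sign matrix; it then invokes the $\Omega(\log N)$ lower bound on the margin complexity of triangular matrices against the $O(1)$ max-norm bound on any collision-probability matrix, forcing $P_1-P_2=O(1/\log N)\to 0$. This also explains the hypothesis $d\geq 3$, which you dismissed as incidental: three coordinates are needed to realize the required distance pattern with one fixed $w$. Your proposal as written establishes only the LSH nonexistence and needs an entirely different idea for ALSH.
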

\begin{proof}
	An LSH (or ALSH) family for NNS over $d_w^{l_1}$ over $\mathbb{R}^d$ ($d>3$) is also an LSH (or ALSH) family for NNS over $d_w^{l_1}$ over a three-dimensional subspace, i.e., over $\mathbb{R}^3$.
	Hence, we only need to prove that there is no LSH or ALSH family for NNS over $d_w^{l_1}$ over $\mathbb{R}^3$.
	Assume by contradiction that for some $R_1<R_2$ and $P_1>P_2$ there exists an $(R_1,R_2,P_1,P_2)$-sensitive LSH family $\mathcal{H}_{(h)}$ or ALSH family $\mathcal{H}_{(f,g)}$ for NNS over $d_w^{l_1}$ over $\mathbb{R}^3$.
	Consider a set of $N$ data points $\{o^1,o^2,\ldots,o^N\}\subset\mathbb{R}^3$ and a set of $N$ query points $\{q^1,q^2,\ldots,q^N\}\subset\mathbb{R}^3$, where
	for $1\leq i,j\leq N$,
	\begin{equation}
	\begin{aligned}
	o^i&=(iR_1-iR_2, 0, 0)\\
	q^j&=(0, jR_1-jR_2, R_1).
	\end{aligned}
	\end{equation}
	The weight vector specified along with each query point is set as follows:
	\begin{equation}
	w=\begin{cases}
	(1,-1,-1)&\text{if}\ R_1<0\\
	(1,-1,1)&\text{if}\ R_1\geq0.
	\end{cases}
	\end{equation}
	Thus, $d_w^{l_1}(o^i,q^j)=(i-j)(R_2-R_1)+R_1$ for $1\leq i,j\leq N$. As can be seen, $d_w^{l_1}(o^i,q^j)\leq R_1$ if $1\leq i\leq j\leq N$ and $d_w^{l_1}(o^i,q^j)\geq R_2$ if $1\leq j<i\leq N$. Let $A\in\mathbb{R}^{N\times N}$ be a sign matrix where each element is
	\begin{equation}
	A(i,j)=\begin{cases}
	1&\text{if}\ d_w^{l_1}(o^i,q^j)\leq R_1\\
	-1&\text{if}\ d_w^{l_1}(o^i,q^j)\geq R_2\\
	0&\text{otherwise}.
	\end{cases}
	\end{equation}
	Obviously, $A$ is triangular with +1 on and above the diagonal and -1 below it.
	Consider also the matrix $B\in\mathbb{R}^{N\times N}$ of collision probabilities $B(i,j)=Pr[h(o^i)=h(q^j)]$ (for $\mathcal{H}_{(h)}$) or $B(i,j)=Pr[f(o^i)=g(q^j)]$ (for $\mathcal{H}_{(f,g)}$). Let $\theta=\frac{P_1+P_1}{2}<1$ and $\epsilon=\frac{P_1-P_2}{2}>0$. It is easy to know that $A(i,j)(B(i,j)-\theta)\geq\epsilon$ for $1\leq i,j\leq N$. That is,
	\begin{equation}
	\label{mn}
	A\odot\frac{B-\theta}{\epsilon}\geq1,
	\end{equation}
	where $\odot$ denotes the Hadamard (element-wise) product.
	From \cite{DBLP:conf/colt/SrebroS05}, the margin complexity of the sign matrix $A$ is $mc(A)=\inf_{A\odot C\geq1}\Vert C\Vert_{max}$, where $\Vert\cdot\Vert_{max}$ is the max-norm of a matrix. Since $A$ is also an $N\times N$ triangular matrix, the margin complexity of $A$ is bounded by $mc(A)=\Omega(\log N)$ according to \cite{DBLP:journals/ml/ForsterSSS03}. 
	Therefore, from Equation \ref{mn}, we can obtain 
	\begin{equation}
	\label{xyz}
	\Vert\frac{B-\theta}{\epsilon}\Vert_{max}=\Omega(\log N).
	\end{equation}
	Since $B$ is a collision probability matrix,
	the max-norm of $B$ satisfies $\Vert B\Vert_{max}\leq 1$ \cite{DBLP:conf/alt/NeyshaburMS14}. Shifting $B$ by $0<\theta<1$ changes $\Vert B\Vert_{max}$ by at most $\theta$. Thus, we have
	\begin{equation}
	\label{abc}
	\Vert B-\theta\Vert_{max}<2.
	\end{equation}
	Combining Equations \ref{xyz} and \ref{abc}, we can easily derive that $\epsilon=O(\frac{1}{\log N})$. 
	For any $\epsilon=\frac{P_1-P_2}{2}>0$, we get a contradiction by selecting a large enough $N$.
\end{proof}

The proof of Theorem \ref{no lsh/alsh over Rd} is similar to that of Theorem 3.1 in \cite{DBLP:conf/icml/NeyshaburS15}. 
Due to space limitations, for the details of the max-norm and margin complexity involved in the proof of Theorem \ref{no lsh/alsh over Rd}, please refer to \url{http://proceedings.mlr.press/v37/neyshabur15-supp.pdf}.

Actually, in real scenarios data points and query points are usually located in bounded spaces.
Consider the typical case of $\mathcal{X}=\mathcal{Y}=\mathcal{S}$, where $\mathcal{S}\subset\mathbb{R}^d$ is a bounded space.
The following theorem shows nonexistence of an LSH family for NNS over $d_w^{l_1}$ over $\mathcal{X}=\mathcal{Y}=\mathcal{S}$.
\begin{theorem}
	\label{no lsh/alsh over M}
	For any $d>0$, $R_1<R_2$ and $P_1>P_2$, there is no $(R_1,R_2,P_1,P_2)$-sensitive LSH family for NNS over $d_w^{l_1}$ over $\mathcal{X}=\mathcal{Y}=\mathcal{S}$.
\end{theorem}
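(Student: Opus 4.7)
The cleanest route to this impossibility exploits a structural asymmetry between LSH and ALSH: in an LSH family $\mathcal{H}_{(h)}$ the collision probability $\Pr[h(o)=h(q)]$ depends only on the pair $(o,q)$, whereas the distance $d_w^{l_1}(o,q)=\sum_{i=1}^{d}w_i|o_i-q_i|$ is linear in $w$ and, for any fixed pair with $o\neq q$, can be tuned to take any real value by choosing $w$ appropriately. Because a single hash function $h$ is used for both data and query, and the weight vector is revealed only at query time, a fixed collision probability must simultaneously satisfy the $R_1$ and $R_2$ constraints under different weights — which is impossible.

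The plan is to argue by contradiction. Suppose a $(R_1,R_2,P_1,P_2)$-sensitive LSH family for NNS over $d_w^{l_1}$ exists over $\mathcal{X}=\mathcal{Y}=\mathcal{S}$. Since $\mathcal{S}$ is a nondegenerate bounded region, pick any two distinct points $o,q\in\mathcal{S}$; there must be some coordinate $i_0$ with $|o_{i_0}-q_{i_0}|>0$. Set $p:=\Pr[h(o)=h(q)]$, which depends only on the pair $(o,q)$ once $\mathcal{H}_{(h)}$ is fixed.

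Next, construct two weight vectors supported only on coordinate $i_0$: let $w^{(1)}$ have $i_0$-th component $R_1/|o_{i_0}-q_{i_0}|$ and all other coordinates zero, and analogously $w^{(2)}$ with $i_0$-th component $R_2/|o_{i_0}-q_{i_0}|$. A direct computation gives $d_{w^{(1)}}^{l_1}(o,q)=R_1$ and $d_{w^{(2)}}^{l_1}(o,q)=R_2$. Applying the two sensitivity conditions of Definition \ref{lsh family} to the same pair $(o,q)$ under these two queries forces $p\geq P_1$ and $p\leq P_2$ simultaneously, contradicting $P_1>P_2$.

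There is no serious technical obstacle; the proof rests entirely on correctly interpreting the definitional setup. The subtle point is the observation that a meaningful LSH family here must be a single family serving queries for all weight vectors $w$ (otherwise one could trivially pre-index per $w$), while the hash function itself has no access to $w$. This is precisely why the analogous claim fails for ALSH and must be handled separately: the query hash $g(q)=h'(Q(q))$ is implemented via a query transformation $Q$ that can encode $w$, so the collision probability is no longer $w$-invariant and the short argument above breaks down — one then resorts to the margin-complexity technique used for Theorem \ref{no lsh/alsh over Rd}.
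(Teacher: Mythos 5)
Your proposal is correct and follows essentially the same argument as the paper: fix a pair of distinct points, observe that the collision probability of a (symmetric, $w$-oblivious) hash family is a single number for that pair, and then choose two weight vectors realizing distances $R_1$ and $R_2$ to force $P_1\leq p\leq P_2$, contradicting $P_1>P_2$. Your version is in fact slightly more explicit than the paper's, since you write down the scaling $w^{(1)},w^{(2)}$ supported on a differing coordinate rather than merely asserting that suitable $w$ exist.
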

\begin{proof}
	Assume by contradiction that for some $d>0$, $R_1<R_2$ and $P_1>P_2$ there exists an $(R_1,R_2,P_1,P_2)$-sensitive LSH family $\mathcal{H}_{(h)}$ for NNS over $d_w^{l_1}$ over $\mathcal{S}$.
	Let $o,q\in\mathcal{S}$ where $o\neq q$\footnote{Ignore the trivial case that $\mathcal{S}$ contains only a single point.}. 
	As $w\in\mathbb{R}^d$, we can always set $w$ to a value such that $d_w^{l_1}(o,q)=R_1$ and thus $Pr[h(o)=h(q)]\geq P_1$. Moreover, we can always set $w$ to another value such that $d_w^{l_1}(o,q)=R_2$ and thus $Pr[h(o)=h(q)]\leq P_2$. However, since data points should be hashed before queries arrive, $\mathcal{H}_{(h)}$ can not involve $w$. So $Pr[h(o)=h(q)]$ is not affected by $w$, which leads to a contradiction.
\end{proof}
Due to the negative results in Theorems \ref{no lsh/alsh over Rd} and \ref{no lsh/alsh over M}, we seek to propose ALSH families for NNS over $d_w^{l_1}$ over bounded spaces in Section \ref{our solution}. 
Notice that if an ALSH family is suitable for NNS over $d_w^{l_1}$ over $\mathcal{X}=\mathcal{Y}=[M_{l},M_{u}]^d$ ($M_{l}<M_{u}$), it must also be suitable for NNS over $d_w^{l_1}$ over $\mathcal{X}=\mathcal{Y}=\mathcal{S}$ for any $\mathcal{S}\subseteq[M_{l},M_{u}]^d$. 
Thus, it is sufficient to deal with the case of $\mathcal{X}=\mathcal{Y}=[M_{l},M_{u}]^d$. Further, suppose $\mathcal{X}=\mathcal{Y}=[0,M_u-M_l]^d$. Otherwise, it can be satisfied by shifting $o,q\in[M_{l},M_{u}]^d$ without changing the results of NNS over $d_w^{l_1}$.
\section{Our Solutions}
\label{our solution}
Let $M=\lfloor (M_u-M_l)t\rfloor$ ($t>0$).
The following Observation \ref{observation 1} indicates that if we find an ALSH family for NNS over $d_w^{l_1}$ over $\mathcal{X}=\mathcal{Y}=\{0,1,\ldots,M\}^d$, a similar ALSH family can be immediately obtained for NNS over $d_w^{l_1}$ over $\mathcal{X}=\mathcal{Y}=[0,M_u-M_l]^d$.
Thus, we only need to consider NNS over $d_w^{l_1}$ over $\mathcal{X}=\mathcal{Y}=\{0,1,\ldots,M\}^d$ in the rest of the paper. Note that in our solutions $M$ can be an arbitrary positive integer.
\newtheorem{observation}{Observation}
\begin{observation}
	\label{observation 1}
	Define a vector function: $u_t(x)=\lfloor xt\rfloor=(\lfloor x_1t\rfloor,\lfloor x_2t\rfloor,\ldots,\lfloor x_dt\rfloor)$, where $x=(x_1,x_2,\ldots,x_d)\in[0,M_u-M_l]^d$ and $t>0$.
	For any $d>0$, $R_1<R_2$ and $P_1>P_2$,
	if $\mathcal{H}_{(f,g)}$ is an $(R_1,R_2,P_1,P_2)$-sensitive ALSH family for NNS over $d_w^{l_1}$ over $\mathcal{X}=\mathcal{Y}=\{0,1,\ldots,\lfloor(M_u-M_l)t\rfloor\}^d$, then  
	$\mathcal{H}_{(f\circ u_t,g\circ u_t)}$ must be
	an $(R_1',R_2',P_1,P_2)$-sensitive ALSH family for NNS over $d_w^{l_1}$ over $\mathcal{X}=\mathcal{Y}=[0,M_u-M_l]^d$,
	where $R_1'=(R_1-\sum_{i=1}^{d}\left|w_i\right|)/t$ and
	$R_2'=(R_2+\sum_{i=1}^{d}\left|w_i\right|)/t$.
\end{observation}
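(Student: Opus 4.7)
The plan is to show that the discretization $u_t$ approximately scales the distance $d_w^{l_1}$ by a factor of $t$, with additive error at most $S := \sum_{i=1}^{d} |w_i|$, and then to transfer the $(R_1, R_2, P_1, P_2)$-sensitivity hypothesis through this approximation. The first step will be to verify that $u_t$ indeed maps $[0, M_u - M_l]^d$ into $\{0, 1, \ldots, \lfloor (M_u - M_l) t \rfloor\}^d$, which is immediate from monotonicity of the floor function and ensures that the composed hash functions $f \circ u_t$ and $g \circ u_t$ are well-defined on the new domain.

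For the main estimate, I would use $\lfloor x \rfloor = x - \{x\}$ with $\{x\} \in [0, 1)$ to write, for each coordinate $i$,
\begin{equation*}
\lfloor o_i t \rfloor - \lfloor q_i t \rfloor = (o_i - q_i) t - (\{o_i t\} - \{q_i t\}),
\end{equation*}
from which $\bigl|\, |\lfloor o_i t \rfloor - \lfloor q_i t \rfloor| - t|o_i - q_i| \,\bigr| \leq 1$. Multiplying by $w_i$ (so that on the left the sign of $w_i$ is absorbed into $|w_i|$) and summing over $i$ then yields the key two-sided bound
\begin{equation*}
t \cdot d_w^{l_1}(o, q) - S \;\leq\; d_w^{l_1}(u_t(o), u_t(q)) \;\leq\; t \cdot d_w^{l_1}(o, q) + S.
\end{equation*}

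From here the two sensitivity conditions follow mechanically. If $d_w^{l_1}(o, q) \leq R_1' = (R_1 - S)/t$, the right inequality gives $d_w^{l_1}(u_t(o), u_t(q)) \leq R_1$, so the hypothesis on $\mathcal{H}_{(f,g)}$ yields $\Pr[f(u_t(o)) = g(u_t(q))] \geq P_1$. Symmetrically, $d_w^{l_1}(o, q) \geq R_2'$ implies $d_w^{l_1}(u_t(o), u_t(q)) \geq R_2$, and hence the collision probability is at most $P_2$. The inequality $R_1' < R_2'$ is immediate from $R_1 < R_2$ and $S \geq 0$, while $P_1 > P_2$ is inherited from the hypothesis. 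The only real obstacle is bookkeeping: the sign of each $w_i$ must be tracked when passing from the per-coordinate bound to the weighted sum, which is precisely what turns $\sum_i w_i$ into the absolute sum $S = \sum_i |w_i|$ that defines the slack in $R_1'$ and $R_2'$.
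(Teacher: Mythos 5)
Your proof is correct and follows essentially the same route as the paper: the paper's argument rests on exactly the two-sided bound $t\,d_w^{l_1}(o,q)-\sum_i|w_i|\leq d_w^{l_1}(\lfloor ot\rfloor,\lfloor qt\rfloor)\leq t\,d_w^{l_1}(o,q)+\sum_i|w_i|$, which it dismisses as ``a simple calculation,'' and then transfers the sensitivity conditions mechanically just as you do. Your derivation of that bound via $\lfloor x\rfloor=x-\{x\}$ and the per-coordinate estimate is a useful filling-in of the step the paper omits.
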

\begin{proof}
	Let $o,q\in[0,M_u-M_l]^d$. Then $\lfloor ot\rfloor,\lfloor qt\rfloor\in\{0,1,\ldots,\lfloor(M_u-M_l)t\rfloor\}^d$.
	A simple calculation shows that $d_w^{l_1}(\lfloor ot\rfloor,\lfloor qt\rfloor)-\sum_{i=1}^{d}\left|w_i\right|\leq d_w^{l_1}(ot,qt)=td_w^{l_1}(o,q)\leq d_w^{l_1}(\lfloor ot\rfloor,\lfloor qt\rfloor)+\sum_{i=1}^{d}\left|w_i\right|$ holds.
	Thus, we have that $d_w^{l_1}(\lfloor ot\rfloor,\lfloor qt\rfloor)\leq R_1$ if $d_w^{l_1}(o,q)\leq R_1'$ and $d_w^{l_1}(\lfloor ot\rfloor,\lfloor qt\rfloor)\geq R_2$ if $d_w^{l_1}(o,q)\geq R_2'$. Further, since $f(\lfloor ot\rfloor)=f(u_t(o))=(f\circ u_t)(o)$ and $g(\lfloor qt\rfloor)=g(u_t(q))=(g\circ u_t)(q)$, we have that $Pr[(f\circ u_t)(o)=(g\circ u_t)(q)]\geq P_1$ if $d_w^{l_1}(o,q)\leq R_1'$ and $Pr[(f\circ u_t)(o)=(g\circ u_t)(q)]\leq P_2$ if $d_w^{l_1}(o,q)\geq R_2'$. As a result, $\mathcal{H}_{(f\circ u_t,g\circ u_t)}$ is an $(R_1',R_2',P_1,P_2)$-sensitive ALSH family for NNS over $d_w^{l_1}$ over $\mathcal{X}=\mathcal{Y}=[0,M_u-M_l]^d$ (note: $R_1'<R_2'$ always holds since $R_1<R_2$).
\end{proof}
\subsection{From NNS over $d_w^{l_1}$ to MIPS}
\label{convertion}
In the following, we take two steps to convert the problem of NNS over $d_w^{l_1}$ into the problem of MIPS.
As a result of these steps, a novel preprocessing transformation and query transformation are introduced for data points and query points, respectively.
The two transformations are essential to our solutions.

\textbf{Step 1: Convert NNS over $d_w^{l_1}$ into NNS over $d_w^H$}

The generalized weighted Hamming distance $d_w^H$ is defined on the Hamming space and computed in the same way as the generalized weighted Manhattan distance $d_w^{l_1}$. That is, $d_w^H(o,q)=\sum_{i=1}^{d}w_i\left|o_i-q_i\right|$
for any $w=(w_1,w_2,\ldots,w_d)\in\mathbb{R}^d$, $o=(o_1,o_2,\ldots,o_d)\in\{0,1\}^d$ and $q=(q_1,q_2,\ldots,q_d)\in\{0,1\}^d$.

Inspired by \cite{DBLP:conf/vldb/GionisIM99}, we complete this step by applying unary coding.
Specifically, each point $x=(x_1,x_2,\ldots,x_d)\in\{0,1,\ldots,M\}^d$ is mapped into a binary vector
$v(x)=(\text{Unary}(x_1);\text{Unary}(x_2);\ldots;\text{Unary}(x_d))\in\{0,1\}^{Md}$,
where (;) is the concatenation and each $\text{Unary}(x_i)=(x_{i1},x_{i2},\ldots,x_{iM})$ is the unary representation of $x_i$, i.e., a sequence of $x_i$ 1's followed by $(M-x_i)$ 0's. Then $\left|o_i-q_i\right|=\sum_{j=1}^{M}\left|o_{ij}-q_{ij}\right|$ for $o=(o_1,o_2,\ldots,o_d)\in\{0,1,\ldots,M\}^d$, $q=(q_1,q_2,\ldots,q_d)\in\{0,1,\ldots,M\}^d$ and $1\leq i\leq d$.
Moreover, the weight vector $w=(w_1,w_2,\ldots,w_d)$ is mapped into $I(w)=(I(w_1);I(w_2);\ldots;I(w_d))$, where each $I(w_i)$ is a sequence of $M$ $w_i$'s.
As a result, we have
\begin{equation}
\label{gwmdtogwhd}
\begin{split}
d_w^{l_1}(o,q)=&\sum_{i=1}^{d}w_i\left|o_i-q_i\right|\\
=&\sum_{i=1}^{d}w_i(\sum_{j=1}^{M}\left|o_{ij}-q_{ij}\right|)\\
=&\sum_{i=1}^{d}\sum_{j=1}^{M}w_i\left|o_{ij}-q_{ij}\right|\\
=&d_w^H(v(o),v(q)),
\end{split}
\end{equation}
where $w=(w_1,w_2,\ldots,w_d)\in\mathbb{R}^d$, $o=(o_1,o_2,\ldots,o_d)\in\{0,1,\ldots,M\}^d$ and $q=(q_1,q_2,\ldots,q_d)\in\{0,1,\ldots,M\}^d$.
Equation \ref{gwmdtogwhd} indicates that through the above mappings NNS over $d_w^{l_1}$ over $\mathcal{X}=\mathcal{Y}=\{0,1,\ldots,M\}^d$ is converted into NNS over $d_w^H$ over $\mathcal{X}=\mathcal{Y}=\{v(x)\mid x\in\{0,1,\ldots,M\}^d\}\subset\{0,1\}^{Md}$.

\textbf{Step 2: Convert NNS over $d_w^H$ into MIPS} 

This step is based on the following observation.
\begin{observation}
	\label{observation 2}
	For any $o_{ij},q_{ij}\in\{0,1\}$ and $w_i\in\mathbb{R}$, the equation $w_i\left|o_{ij}-q_{ij}\right|=w_i-\left(\cos(\frac{\pi}{2}o_{ij}),\sin(\frac{\pi}{2}o_{ij})\right)^T\left(w_i\cos(\frac{\pi}{2}q_{ij}),w_i\sin(\frac{\pi}{2}q_{ij})\right)$ always holds.
\end{observation}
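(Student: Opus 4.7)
The plan is to reduce the identity to a simple trigonometric statement and then verify the resulting formula for the three possible values of $o_{ij}-q_{ij}$.

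First I would expand the inner product on the right-hand side directly. Multiplying out gives
\begin{equation*}
\bigl(\cos(\tfrac{\pi}{2}o_{ij}),\sin(\tfrac{\pi}{2}o_{ij})\bigr)^T\bigl(w_i\cos(\tfrac{\pi}{2}q_{ij}),w_i\sin(\tfrac{\pi}{2}q_{ij})\bigr)
= w_i\bigl(\cos(\tfrac{\pi}{2}o_{ij})\cos(\tfrac{\pi}{2}q_{ij})+\sin(\tfrac{\pi}{2}o_{ij})\sin(\tfrac{\pi}{2}q_{ij})\bigr).
\end{equation*}
Applying the cosine difference identity $\cos A\cos B+\sin A\sin B=\cos(A-B)$ collapses this to $w_i\cos\!\bigl(\tfrac{\pi}{2}(o_{ij}-q_{ij})\bigr)$. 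So the identity to be proven reduces to showing that $w_i|o_{ij}-q_{ij}|=w_i-w_i\cos\!\bigl(\tfrac{\pi}{2}(o_{ij}-q_{ij})\bigr)$, i.e., $|o_{ij}-q_{ij}|=1-\cos\!\bigl(\tfrac{\pi}{2}(o_{ij}-q_{ij})\bigr)$.

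Next I would verify this equation by inspecting the three possible values of $o_{ij}-q_{ij}\in\{-1,0,1\}$ (since $o_{ij},q_{ij}\in\{0,1\}$). When $o_{ij}-q_{ij}=0$ both sides are $0$; when $o_{ij}-q_{ij}=\pm1$ the left-hand side is $1$ and the right-hand side is $1-\cos(\pm\tfrac{\pi}{2})=1-0=1$. Multiplying through by $w_i$ and rearranging recovers the claimed equality.

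There is no real obstacle here: once the inner product is collapsed via the cosine difference identity, the remaining claim is a three-case arithmetic check. The only thing worth emphasising in writing is that the identity is an equality of real numbers (not in expectation), and that it holds for every $w_i\in\mathbb{R}$ because $w_i$ factors linearly out of both sides. This direct verification suffices, although one could alternatively enumerate all four $(o_{ij},q_{ij})$ combinations and check each one to avoid invoking the trig identity.
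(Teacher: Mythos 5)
Your proof is correct and takes essentially the same approach as the paper, which simply verifies the identity by direct case analysis (the paper splits on $o_{ij}=q_{ij}$ versus $o_{ij}\neq q_{ij}$, while you first collapse the inner product via the angle-difference identity and then split on the value of $o_{ij}-q_{ij}$). The intermediate reduction to $|o_{ij}-q_{ij}|=1-\cos\bigl(\tfrac{\pi}{2}(o_{ij}-q_{ij})\bigr)$ is a harmless reorganization of the same elementary check.
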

\begin{proof}
	We only need to check two cases. Case 1: If $o_{ij}=q_{ij}$, then $w_i\left|o_{ij}-q_{ij}\right|=0=w_i-\left(\cos(\frac{\pi}{2}o_{ij}),\sin(\frac{\pi}{2}o_{ij})\right)^T\left(w_i\cos(\frac{\pi}{2}q_{ij}),w_i\sin(\frac{\pi}{2}q_{ij})\right)$. Case 2: If $o_{ij}\neq q_{ij}$, then $w_i\left|o_{ij}-q_{ij}\right|=w_i=w_i-\left(\cos(\frac{\pi}{2}o_{ij}),\sin(\frac{\pi}{2}o_{ij})\right)^T\left(w_i\cos(\frac{\pi}{2}q_{ij}),w_i\sin(\frac{\pi}{2}q_{ij})\right)$.
\end{proof}

For any $x=(x_1,x_2,\ldots,x_d)\in\{0,1,\ldots,M\}^d$, we define $\widetilde{\cos}\left(\frac{\pi}{2}v(x)\right)$ and $\widetilde{\sin}\left(\frac{\pi}{2}v(x)\right)$ as follows:
\begin{gather}
\label{key1}
\widetilde{\cos}\left(\frac{\pi}{2}v(x)\right)=\left(\widehat{\cos}\left(\frac{\pi}{2}\text{Unary}(x_{1})\right);\widehat{\cos}\left(\frac{\pi}{2}\text{Unary}(x_{2})\right);\ldots;\widehat{\cos}\left(\frac{\pi}{2}\text{Unary}(x_d)\right)\right)\\
\widetilde{\sin}\left(\frac{\pi}{2}v(x)\right)=\left(\widehat{\sin}\left(\frac{\pi}{2}\text{Unary}(x_{1})\right);\widehat{\sin}\left(\frac{\pi}{2}\text{Unary}(x_{2})\right);\ldots;\widehat{\sin}\left(\frac{\pi}{2}\text{Unary}(x_d)\right)\right),
\end{gather}
where
\begin{gather}
\widehat{\cos}\left(\frac{\pi}{2}\text{Unary}(x_i)\right)=\left(\cos\left(\frac{\pi}{2}x_{i1}\right),\cos\left(\frac{\pi}{2}x_{i2}\right),\ldots,\cos\left(\frac{\pi}{2}x_{iM}\right)\right)\\
\label{key2}\widehat{\sin}\left(\frac{\pi}{2}\text{Unary}(x_i)\right)=\left(\sin\left(\frac{\pi}{2}x_{i1}\right),\sin\left(\frac{\pi}{2}x_{i2}\right),\ldots,\sin\left(\frac{\pi}{2}x_{iM}\right)\right).
\end{gather}

According to Observation \ref{observation 2}, we have
\begin{equation}
\label{gwhdtoip}
\begin{split}
d_w^H(v(o),v(q))=&\sum_{i=1}^{d}\sum_{j=1}^{M}w_i\left|o_{ij}-q_{ij}\right|\\
=&\sum_{i=1}^{d}\sum_{j=1}^{M}\left(w_i-\left(\cos(\frac{\pi}{2}o_{ij}),\sin(\frac{\pi}{2}o_{ij})\right)^T\left(w_i\cos(\frac{\pi}{2}q_{ij}),w_i\sin(\frac{\pi}{2}q_{ij})\right)\right)\\
=&M\sum_{i=1}^{d}w_i-d^{IP}\left(T^1\left(v(o)\right),T^2\left(v(q)\right)\right),\\
\end{split}
\end{equation}
where $w=(w_1,w_2,\ldots,w_d)\in\mathbb{R}^d$, $o=(o_1,o_2,\ldots,o_d)\in\{0,1,\ldots,M\}^d$, $q=(q_1,q_2,\ldots,q_d)\in\{0,1,\ldots,M\}^d$, $d^{IP}(\cdot,\cdot)$ is the inner product of two vectors, and $T^1(v(o))$ and $T^2(v(q))$ are respectively as follows:
\begin{gather}
\label{transformation1}
T^1(v(o))=(\widetilde{\cos}(\frac{\pi}{2}v(o));\widetilde{\sin}(\frac{\pi}{2}v(o)))\\\label{transformation2}
T^2(v(q))=(I(w)\odot\widetilde{\cos}(\frac{\pi}{2}v(q));I(w)\odot \widetilde{\sin}(\frac{\pi}{2}v(q))).
\end{gather}
From Equation \ref{gwhdtoip}, we can see that NNS over $d_w^H$ over $\mathcal{X}=\mathcal{Y}=\{v(x)\mid x\in\{0,1,\ldots,M\}^d\}\subset\{0,1\}^{Md}$ can be converted into MIPS over $\mathcal{X}=\{T^1(v(x))\mid x\in\{0,1,\ldots,M\}^d\}\subset\{0,1\}^{2Md}$ and $\mathcal{Y}=\{T^2(v(x))\mid x\in\{0,1,\ldots,M\}^d\}\subset\mathbb{R}^{2Md}$.

To sum up, after Steps 1 and 2, we convert NNS over $d_w^{l_1}$ into MIPS by using the composite functions $T^1(v(\cdot))$ and $T^2(v(\cdot))$ that respectively map data points and query points from $\{0,1,\ldots,M\}^d$ into two higher-dimensional spaces.
Let $P(o)=T^1(v(o))$ and $Q_w(q)=T^2(v(q))$ for $o,q\in\{0,1,\ldots,M\}^d$.
The vector functions $P(\cdot)$ and $Q_w(\cdot)$ are respectively the preprocessing and query transformations for the ALSH families introduced later. 
\subsection{ALSH Schemes for NNS over $d_w^{l_1}$}
Next, we formally present two ALSH schemes for NNS over $d_w^{l_1}$: the first one is called ($d_w^{l_1},l_2$)-ALSH and the second one is called ($d_w^{l_1},\theta$)-ALSH. ($d_w^{l_1},l_2$)-ALSH solves the problem of NNS over $d_w^{l_1}$ by reducing it to the problem of NNS over the $l_2$ distance, while ($d_w^{l_1},\theta$)-ALSH solves the problem of NNS over $d_w^{l_1}$ by reducing it to the problem of NNS over the Angular distance.

\subsubsection{($d_w^{l_1},l_2$)-ALSH}
Based on the transformations $P(\cdot)$ and $Q_w(\cdot)$ and the LSH family $\mathcal{H}_{(h^{l_2})}$ introduced in Section \ref{lsh}, ($d_w^{l_1},l_2$)-ALSH 
uses the ALSH family $\mathcal{H}_{(f^{(d_w^{l_1},l_2)},g^{(d_w^{l_1},l_2)})}=\{f^{(d_w^{l_1},l_2)}:\{0,1,\ldots,M\}^d\rightarrow\mathbb{Z}\}\bigcup\{g^{(d_w^{l_1},l_2)}:\{0,1,\ldots,M\}^d\rightarrow\mathbb{Z}\}$, where $f^{(d_w^{l_1},l_2)}(x)=h^{l_2}(P(x))$ and $g^{(d_w^{l_1},l_2)}(x)=h^{l_2}(Q_w(x))$ for $x\in\{0,1,\ldots,M\}^d$.
Combining Equations \ref{gwmdtogwhd} and \ref{gwhdtoip} we obtain
\begin{equation}
\label{gwhdtoiprelation}
d_w^{l_1}(o,q)=M\sum_{i=1}^{d}w_i-d^{IP}(P(o),Q_w(q)).
\end{equation}
It is easy to know
\begin{gather}
\label{Tnorm1}
\Vert P(o)\Vert_2^2=Md\\\label{Tnorm2}
\Vert Q_w(q)\Vert_2^2=M\sum_{i=1}^{d}w_i^2.
\end{gather}
Thus, we have
\begin{equation}
\label{dwl1tol2}
\begin{split}
d^{l_2}(P(o),Q_w(q))&=\Vert P(o)-Q_w(q)\Vert_2\\&=\sqrt{\Vert P(o)\Vert_2^2+\Vert Q_w(q)\Vert_2^2-2d^{IP}(P(o),Q_w(q))}\\&=\sqrt{M\left(d+\sum_{i=1}^{d}w_i^2\right)-2\left(M\sum_{i=1}^{d}w_i-d_w^{l_1}(o,q)\right)}.
\end{split}
\end{equation}
Let $r=d_w^{l_1}(o,q)$.
According to Equations \ref{collisonl2} and \ref{dwl1tol2}, the collision probability function with respect to $\mathcal{H}_{(f^{(d_w^{l_1},l_2)},g^{(d_w^{l_1},l_2)})}$ is
\begin{equation}
\label{l2_P}
\begin{split}
P^{(d_w^{l_1},l_2)}(r)&=Pr[f^{(d_w^{l_1},l_2)}(o)=g^{(d_w^{l_1},l_2)}(q)]\\&=Pr[h^{l_2}(P(o))=h^{l_2}(Q_w(q))]\\&=P^{l_2}(\Vert P(o)-Q_w(q)\Vert_2)\\&=P^{l_2}\left(\sqrt{M\left(d+\sum_{i=1}^{d}w_i^2\right)-2\left(M\sum_{i=1}^{d}w_i-r\right)}\right).
\end{split}
\end{equation}
Since $P^{l_2}(\cdot)$ is a decreasing function, $P^{(d_w^{l_1},l_2)}(R_1)>P^{(d_w^{l_1},l_2)}(R_2)$ holds for any $R_1<R_2$. Therefore, we obtain the following Lemma \ref{dl1tol2alsh}.
\newtheorem{lemma}{Lemma}
\begin{lemma}
	\label{dl1tol2alsh}
	$\mathcal{H}_{(f^{(d_w^{l_1},l_2)},g^{(d_w^{l_1},l_2)})}$ is $(R_1,R_2,P^{(d_w^{l_1},l_2)}(R_1),P^{(d_w^{l_1},l_2)}(R_2))$-sensitive for any $R_1<R_2$.
\end{lemma}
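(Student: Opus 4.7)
The plan is to read the three conditions of Definition \ref{alsh family} directly off Equation \ref{l2_P}, since the hard algebraic work---bridging $d_w^{l_1}$ to the Euclidean distance on the lifted vectors---has already been done by Equations \ref{gwhdtoiprelation}--\ref{dwl1tol2}. What remains is essentially a monotonicity check on $P^{(d_w^{l_1},l_2)}(\cdot)$.

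First I would note that the collision probability under $\mathcal{H}_{(f^{(d_w^{l_1},l_2)},g^{(d_w^{l_1},l_2)})}$ depends on the pair $(o,q)$ only through $r = d_w^{l_1}(o,q)$: by construction $f^{(d_w^{l_1},l_2)}(o) = h^{l_2}(P(o))$ and $g^{(d_w^{l_1},l_2)}(q) = h^{l_2}(Q_w(q))$, so Equation \ref{collisonl2} applied to the pair $(P(o), Q_w(q))$ yields $\Pr[f^{(d_w^{l_1},l_2)}(o) = g^{(d_w^{l_1},l_2)}(q)] = P^{l_2}(\Vert P(o) - Q_w(q)\Vert_2)$, and Equation \ref{dwl1tol2} then expresses this $l_2$ distance as a function of $r$ alone (once $M$, $d$, and $w$ are fixed). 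This recovers precisely the closed form $P^{(d_w^{l_1},l_2)}(r)$ displayed in Equation \ref{l2_P}, so verifying the three sensitivity conditions reduces to evaluating and comparing this one-variable function at $r = R_1$ and $r = R_2$.

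Next I would establish strict monotonicity. The radicand inside the square root in Equation \ref{dwl1tol2} is affine in $r$ with slope $+2$, so $\Vert P(o) - Q_w(q)\Vert_2$ is strictly increasing in $r$; composing with $P^{l_2}(\cdot)$, which is strictly decreasing by inspection of Equation \ref{collisonl2}, shows that $P^{(d_w^{l_1},l_2)}(\cdot)$ is strictly decreasing. Hence $r \leq R_1$ forces collision probability at least $P^{(d_w^{l_1},l_2)}(R_1)$, $r \geq R_2$ forces at most $P^{(d_w^{l_1},l_2)}(R_2)$, and $R_1 < R_2$ yields the strict gap $P^{(d_w^{l_1},l_2)}(R_1) > P^{(d_w^{l_1},l_2)}(R_2)$---exactly the three conditions of Definition \ref{alsh family}. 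There is essentially no obstacle beyond this monotonicity step; the one thing I would pause to confirm is that the radicand stays non-negative for all admissible $(o,q)$, which is automatic because it equals $\Vert P(o) - Q_w(q)\Vert_2^2$.
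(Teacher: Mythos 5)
Your proposal is correct and follows essentially the same route as the paper: the paper likewise derives the closed form $P^{(d_w^{l_1},l_2)}(r)$ from Equations \ref{collisonl2} and \ref{dwl1tol2}, observes that $P^{l_2}(\cdot)$ is decreasing while the radicand is increasing in $r$, and concludes $P^{(d_w^{l_1},l_2)}(R_1)>P^{(d_w^{l_1},l_2)}(R_2)$ for $R_1<R_2$. Your added remarks on the probability depending on $(o,q)$ only through $r$ and on the radicand's non-negativity are sensible sanity checks but do not change the argument.
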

According to Theorem \ref{sublinear} and Lemma \ref{dl1tol2alsh}, we have the following Theorem \ref{l2/complexity}.
\begin{theorem}
	\label{l2/complexity}
	($d_w^{l_1},l_2$)-ALSH can solve the problem of $(R_1,R_2)$-NNS over $d_w^{l_1}$ with $O(n^{\rho^{(d_w^{l_1},l_2)}}d\log n)$ query time and $O(n^{1+\rho^{(d_w^{l_1},l_2)}})$ space, where $\rho^{(d_w^{l_1},l_2)}=\frac{\log\left(P^{(d_w^{l_1},l_2)}(R_1)\right)}{\log \left(P^{(d_w^{l_1},l_2)}(R_2)\right)}<1$.
\end{theorem}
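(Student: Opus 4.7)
The plan is to derive Theorem \ref{l2/complexity} as a direct corollary of Theorem \ref{sublinear} applied to the ALSH family $\mathcal{H}_{(f^{(d_w^{l_1},l_2)},g^{(d_w^{l_1},l_2)})}$. Lemma \ref{dl1tol2alsh} already hands us the sensitivity parameters of this family, namely $(R_1,R_2,P^{(d_w^{l_1},l_2)}(R_1),P^{(d_w^{l_1},l_2)}(R_2))$, so the only substantive work left is to confirm that these parameters actually fit the hypothesis of Theorem \ref{sublinear}, in particular that the resulting exponent $\rho^{(d_w^{l_1},l_2)}$ is strictly less than one.

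First I would invoke Lemma \ref{dl1tol2alsh} to fix the sensitivity quadruple. Next I would plug $P_1 = P^{(d_w^{l_1},l_2)}(R_1)$ and $P_2 = P^{(d_w^{l_1},l_2)}(R_2)$ into Theorem \ref{sublinear}; that theorem immediately yields a data structure with $O(n^{\rho}d\log n)$ query time and $O(n^{1+\rho})$ space, with $\rho = \log P_1/\log P_2$. Rewriting $\rho$ in terms of $P^{(d_w^{l_1},l_2)}$ gives exactly the quantity $\rho^{(d_w^{l_1},l_2)}$ claimed in the statement. The ambient dimension used by the hash functions is $O(d)$ up to the unary blow-up and the transformations $P(\cdot)$ and $Q_w(\cdot)$, but since those transformations and the subsequent $h^{l_2}$ evaluation are all linear-time in $d$ (once $M$ is treated as a constant built into the data structure), the $d$ factor in the query time is preserved.

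The only real thing to check is the strict bound $\rho^{(d_w^{l_1},l_2)}<1$. For this I would argue as follows. From Equation \ref{dwl1tol2}, $\Vert P(o)-Q_w(q)\Vert_2$ is a strictly increasing function of $d_w^{l_1}(o,q)=r$, so $R_1<R_2$ translates into a strictly larger $l_2$ distance at $R_2$ than at $R_1$. Since the collision probability $P^{l_2}(\cdot)$ of Equation \ref{collisonl2} is strictly decreasing and takes values in the open interval $(0,1)$ for every positive argument, we get $0<P^{(d_w^{l_1},l_2)}(R_2)<P^{(d_w^{l_1},l_2)}(R_1)<1$. Both logarithms are therefore strictly negative with $\lvert\log P^{(d_w^{l_1},l_2)}(R_1)\rvert<\lvert\log P^{(d_w^{l_1},l_2)}(R_2)\rvert$, which forces $\rho^{(d_w^{l_1},l_2)}<1$.

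The main obstacle I anticipate is less a mathematical than a bookkeeping one: one must make sure that the transformed vectors live in a space of dimension polynomial in $d$ (here $2Md$) so that hashing costs remain $O(d)$ up to constants folded into $M$, and that the preprocessing transformation $P(\cdot)$ is applied once at index-construction time while $Q_w(\cdot)$ is applied at query time (so that the $w$-dependence is only incurred per query). Once this is spelled out, the theorem follows by direct substitution into Theorem \ref{sublinear}.
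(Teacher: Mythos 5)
Your proposal matches the paper's own (implicit) proof: the paper derives Theorem \ref{l2/complexity} exactly by combining Lemma \ref{dl1tol2alsh} with Theorem \ref{sublinear}, which is what you do. Your additional verification that $0<P^{(d_w^{l_1},l_2)}(R_2)<P^{(d_w^{l_1},l_2)}(R_1)<1$ forces $\rho^{(d_w^{l_1},l_2)}<1$ is a correct elaboration of a step the paper leaves to the reader.
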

\subsubsection{($d_w^{l_1},\theta$)-ALSH}
Now we introduce the scheme of ($d_w^{l_1},\theta$)-ALSH. Based on the transformations $P(\cdot)$ and $Q_w(\cdot)$ and the LSH family $\mathcal{H}_{(h^{\theta})}$ introduced in Section \ref{lsh}, ($d_w^{l_1},\theta$)-ALSH uses the ALSH family $\mathcal{H}_{(f^{(d_w^{l_1},\theta)},g^{(d_w^{l_1},\theta)})}=\{f^{(d_w^{l_1},\theta)}:\{0,1,\ldots,M\}^d\rightarrow\{0,1\}\}\bigcup\{g^{(d_w^{l_1},\theta)}:\{0,1,\ldots,M\}^d\rightarrow\{0,1\}\}$, where $f^{(d_w^{l_1},\theta)}(x)=h^{\theta}(P(x))$ and $g^{(d_w^{l_1},\theta)}(x)=h^{\theta}(Q_w(x))$ for $x\in\{0,1,\ldots,M\}^d$.
According to Equations \ref{gwhdtoiprelation}, \ref{Tnorm1} and \ref{Tnorm2}, the relationship between $d_w^{l_1}(o,q)$ and $d^{\theta}(P(o),Q_w(q))$is as follows:
\begin{equation}
\label{dwl1totheta}
\begin{split}
d^{\theta}(P(o),Q_w(q))&=\arccos\left(\frac{P(o)^TQ_w(q)}{\Vert P(o)\Vert_2\Vert Q_w(q)\Vert_2}\right)\\&=\arccos\left(\frac{d^{IP}(P(o),Q_w(q))}{\Vert P(o)\Vert_2\Vert Q_w(q)\Vert_2}\right)\\&=\arccos\left(\frac{M\sum_{i=1}^{d}w_i-d_w^{l_1}(o,q)}{M\sqrt{d\sum_{i=1}^{d}w_i^2}}\right).
\end{split}
\end{equation}

Let $r=d_w^{l_1}(o,q)$.
From Equations \ref{collisontheta} and \ref{dwl1totheta}, it can be seen that the collision probability function with respect to $\mathcal{H}_{(f^{(d_w^{l_1},\theta)},g^{(d_w^{l_1},\theta)})}$ is
\begin{equation}
\label{theta_P}
\begin{split}
P^{(d_w^{l_1},\theta)}(r)&=Pr[f^{(d_w^{l_1},\theta)}(o)=g^{(d_w^{l_1},\theta)}(q)]\\&=Pr[h^{\theta}(P(o))=h^{\theta}(Q_w(q))]\\&=1-\frac{1}{\pi}\arccos\left(\frac{P(o)^TQ_w(q)}{\Vert P(o)\Vert_2\Vert Q_w(q)\Vert_2}\right)\\&=1-\frac{1}{\pi}\arccos\left(\frac{M\sum_{i=1}^{d}w_i-r}{M\sqrt{d\sum_{i=1}^{d}w_i^2}}\right).
\end{split}
\end{equation}
It is easy to know that $P^{(d_w^{l_1},\theta)}(\cdot)$ is a decreasing function. Thus, $P^{(d_w^{l_1},\theta)}(R_1)>P^{(d_w^{l_1},\theta)}(R_2)$ holds for any $R_1<R_2$. Then we obtain the following Lemma \ref{dl1tothetaalsh}.
\begin{lemma}
	\label{dl1tothetaalsh}
	$\mathcal{H}_{(f^{(d_w^{l_1},\theta)},g^{(d_w^{l_1},\theta)})}$ is $(R_1,R_2,P^{(d_w^{l_1},\theta)}(R_1),P^{(d_w^{l_1},\theta)}(R_2))$-sensitive for any $R_1<R_2$.
\end{lemma}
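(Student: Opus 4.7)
The plan is to verify the three conditions in the ALSH family definition (Definition \ref{alsh family}) directly, leveraging all the machinery already established in this subsection. The collision probability has already been computed in closed form as Equation \ref{theta_P}, namely
\[
\Pr[f^{(d_w^{l_1},\theta)}(o) = g^{(d_w^{l_1},\theta)}(q)] = P^{(d_w^{l_1},\theta)}\bigl(d_w^{l_1}(o,q)\bigr),
\]
so the content of the lemma reduces entirely to establishing the appropriate monotonicity of $P^{(d_w^{l_1},\theta)}(\cdot)$.

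First I would argue that $P^{(d_w^{l_1},\theta)}(r)$ is strictly decreasing in $r$. From Equation \ref{theta_P}, as $r$ increases, the numerator $M\sum_i w_i - r$ decreases, so the argument of $\arccos$ decreases, and since $\arccos$ is itself strictly decreasing, its value increases with $r$; multiplying by $-1/\pi$ and adding $1$ flips the direction again, so $P^{(d_w^{l_1},\theta)}(r)$ strictly decreases. (The denominator $M\sqrt{d\sum_i w_i^2}$ is a positive constant once $w$ is fixed, so it plays no role in the monotonicity; if all $w_i=0$ the lemma is vacuous since then $d_w^{l_1}\equiv 0$ and no $R_1<R_2$ both arise.)

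Second, I would apply this monotonicity to close out the three bullets of Definition \ref{alsh family}. If $d_w^{l_1}(o,q) \leq R_1$, then by monotonicity $P^{(d_w^{l_1},\theta)}(d_w^{l_1}(o,q)) \geq P^{(d_w^{l_1},\theta)}(R_1)$, and by Equation \ref{theta_P} this is exactly the collision probability, giving the first bullet. Similarly, $d_w^{l_1}(o,q) \geq R_2$ yields a collision probability at most $P^{(d_w^{l_1},\theta)}(R_2)$, which is the second bullet. Finally, $R_1 < R_2$ is given, and strict monotonicity immediately yields $P^{(d_w^{l_1},\theta)}(R_1) > P^{(d_w^{l_1},\theta)}(R_2)$, which is the third bullet.

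There is no real obstacle: all of the nontrivial work (the chain of transformations in Section \ref{convertion}, the inner-product identity via Observation \ref{observation 2}, and the explicit evaluation of the Angular-LSH collision probability on $P(o),Q_w(q)$) has already been performed in the lead-up to the statement. The only thing to be careful about is making the monotonicity argument genuinely strict and independent of the sign of $\sum_i w_i$, which is handled because the argument of $\arccos$ shifts by $-r/(M\sqrt{d\sum_i w_i^2})$ regardless of the value of the constant term $M\sum_i w_i$.
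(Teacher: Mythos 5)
Your proposal is correct and matches the paper's own (very brief) argument: the paper likewise derives the closed-form collision probability in Equation \ref{theta_P} and then simply notes that $P^{(d_w^{l_1},\theta)}(\cdot)$ is decreasing, so the three conditions of Definition \ref{alsh family} follow. Your write-up just spells out the monotonicity of the $\arccos$ composition and the degenerate all-zero-weight case, which the paper leaves implicit.
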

Combining Theorem \ref{sublinear} and Lemma \ref{dl1tothetaalsh} we have the following Theorem \ref{theta/complexity}.
\begin{theorem}
	\label{theta/complexity}
	($d_w^{l_1},\theta$)-ALSH can solve the problem of $(R_1,R_2)$-NNS over $d_w^{l_1}$ with $O(n^{\rho^{(d_w^{l_1},\theta)}}d\log n)$ query time and $O(n^{1+\rho^{(d_w^{l_1},\theta)}})$ space, where $\rho^{(d_w^{l_1},\theta)}=\frac{\log\left(P^{(d_w^{l_1},\theta)}(R_1)\right)}{\log \left(P^{(d_w^{l_1},\theta)}(R_2)\right)}<1$.
\end{theorem}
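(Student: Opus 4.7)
The plan is to derive Theorem \ref{theta/complexity} as a direct corollary of two ingredients that are already in hand: the sensitivity of the ALSH family $\mathcal{H}_{(f^{(d_w^{l_1},\theta)},g^{(d_w^{l_1},\theta)})}$ established in Lemma \ref{dl1tothetaalsh}, and the generic sublinear-time reduction in Theorem \ref{sublinear}. No new analytic calculation should be needed beyond checking that $\rho^{(d_w^{l_1},\theta)}<1$.

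First, I would invoke Lemma \ref{dl1tothetaalsh} to assert that $\mathcal{H}_{(f^{(d_w^{l_1},\theta)},g^{(d_w^{l_1},\theta)})}$ is $(R_1,R_2,P^{(d_w^{l_1},\theta)}(R_1),P^{(d_w^{l_1},\theta)}(R_2))$-sensitive for the given $R_1<R_2$. Second, I would feed this sensitivity quadruple into Theorem \ref{sublinear}, which immediately yields a data structure with $O(n^{\rho}d\log n)$ query time and $O(n^{1+\rho})$ space for $\rho=\log P^{(d_w^{l_1},\theta)}(R_1)/\log P^{(d_w^{l_1},\theta)}(R_2)$, matching the stated bounds once we identify this $\rho$ with $\rho^{(d_w^{l_1},\theta)}$.

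The only genuine check is that $\rho^{(d_w^{l_1},\theta)}<1$. For this I would argue that the function $P^{(d_w^{l_1},\theta)}(\cdot)$ defined by Equation \ref{theta_P} is strictly decreasing (this was already noted right before Lemma \ref{dl1tothetaalsh}, since $\arccos$ is decreasing on $[-1,1]$ and the argument is a linearly decreasing function of $r$), and that its values lie strictly in $(0,1)$ whenever the argument of $\arccos$ lies strictly inside $(-1,1)$. Under these conditions both $\log P^{(d_w^{l_1},\theta)}(R_1)$ and $\log P^{(d_w^{l_1},\theta)}(R_2)$ are negative, and monotonicity gives $P^{(d_w^{l_1},\theta)}(R_1)>P^{(d_w^{l_1},\theta)}(R_2)$, so $|\log P^{(d_w^{l_1},\theta)}(R_1)|<|\log P^{(d_w^{l_1},\theta)}(R_2)|$ and thus the ratio of two negative logarithms is strictly less than $1$.

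The only mild obstacle is the boundary behavior of $P^{(d_w^{l_1},\theta)}$: one must rule out the degenerate case where the $\arccos$ argument hits $\pm 1$, which would make $P^{(d_w^{l_1},\theta)}$ equal $0$ or $1$ and render $\rho$ undefined. This is handled by noting that for $o\in\{0,1,\ldots,M\}^d$ and any weight vector $w$ with some $w_i\ne 0$, the transformed vectors $P(o)$ and $Q_w(q)$ have strictly positive $l_2$-norms (by Equations \ref{Tnorm1} and \ref{Tnorm2}) and are never colinear in the relevant ranges, so the collision probability stays strictly between $0$ and $1$; thus $\rho^{(d_w^{l_1},\theta)}$ is well-defined and strictly less than $1$, and the theorem follows.
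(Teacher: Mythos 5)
Your proposal is correct and takes essentially the same route as the paper, which derives Theorem \ref{theta/complexity} in one line by combining the sensitivity guarantee of Lemma \ref{dl1tothetaalsh} with the generic reduction of Theorem \ref{sublinear}. Your additional care about the boundary behavior of $P^{(d_w^{l_1},\theta)}$ (ruling out collision probabilities of exactly $0$ or $1$ so that $\rho^{(d_w^{l_1},\theta)}$ is well-defined) goes slightly beyond what the paper writes, but does not change the argument.
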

\subsubsection{Implementation Skills of ($d_w^{l_1},l_2$)-ALSH and ($d_w^{l_1},\theta$)-ALSH}
The scheme of ($d_w^{l_1},l_2$)-ALSH (or ($d_w^{l_1},\theta$)-ALSH) needs to compute the hash values $h^{l_2}(P(o))$ and $h^{l_2}(Q_w(q))$ (or $h^{\theta}(P(o))$ and $h^{\theta}(Q_w(q))$).
We can easily know that the running time of computing $h^{l_2}(P(o))$ (or $h^{\theta}(P(o))$) is dominated by the time cost of obtaining $a^TP(o)$ and the running time of computing $h^{l_2}(Q_w(q))$ (or $h^{\theta}(Q_w(q))$) is dominated by the time cost of obtaining $a^TQ_w(q)$, where $a$ is a $2Md$-dimensional vector where each entry is chosen independently from the standard normal distribution. 
The naive approach to obtain $a^TP(o)$ or $a^TQ_w(q)$ is to compute the inner product of the two corresponding vectors. 
However, it will require $2Md$ multiplications and $2Md-1$ additions, which is expensive when $M$ is large.

Next, we show how to obtain $a^TP(o)$ with only $2d-1$ additions and obtain $a^TQ_w(q)$ with only $2d-1$ additions and $d$ multiplications.
Suppose $a=(a_1;a_2;\ldots;a_d;a_{d+1};a_{d+2};\ldots;a_{2d})$, where $a_i=(a_{i1},a_{i2},\ldots,a_{iM})\in\mathbb{R}^M$.
According to Equations \ref{key1}-\ref{key2}, \ref{transformation1} and \ref{transformation2}, we have
$a^TP(o)=\sum_{i=1}^{d}a_i^T\widehat{\cos}(\frac{\pi}{2}\text{Unary}(o_i))+\sum_{i=1}^{d}a_{d+i}^T\widehat{\sin}(\frac{\pi}{2}\text{Unary}(o_i))$ and $a^TQ_w(q)=\sum_{i=1}^{d}a_i^T\widehat{\cos}(\frac{\pi}{2}\text{Unary}(q_i))w_i+\sum_{i=1}^{d}a_{d+i}^T\widehat{\sin}(\frac{\pi}{2}\text{Unary}(q_i))w_i$.
Since $\widehat{\cos}(\frac{\pi}{2}\text{Unary}(o_i))$ is a sequence of $o_i$ 0's followed by $(M-o_i)$ 1's and
$\widehat{\sin}(\frac{\pi}{2}\text{Unary}(o_i))$ is a sequence of $o_i$ 1's followed by $(M-o_i)$ 0's, it is easy to know that $a_i^T\widehat{\cos}(\frac{\pi}{2}\text{Unary}(o_i))$ is the sum of the last $M-o_i$ elements of $a_i$ and
$a_{d+i}^T\widehat{\sin}(\frac{\pi}{2}\text{Unary}(o_i))$ is the sum of the first $o_i$ elements of $a_{d+i}$.
Thus, we preprocess the vector $a$ to obtain $a'=(a_1';a_2';\ldots;a_d';a_{d+1}';a_{d+2}';\ldots;a_{2d}')$, where $a_i'=(a_{i1}',a_{i2}',\ldots,a_{iM}',a_{i(M+1)}')$ and
\begin{equation}
a_{ij}'=\begin{cases}
\sum_{k=j}^{M}a_{ik}&\text{if}\ 1\leq i\leq d\text{ and }1\leq j\leq M\\
0&\text{if}\ 1\leq i\leq d\text{ and }j=M+1\\
0&\text{if}\ d+1\leq i\leq 2d\text{ and }j=1\\
\sum_{k=1}^{j-1}a_{ik}&\text{if}\ d+1\leq i\leq 2d\text{ and }2\leq j\leq M+1.\\
\end{cases}
\end{equation}
Then we have $a^TP(o)=\sum_{i=1}^{d}a_{i(o_i+1)}'+\sum_{i=1}^{d}a_{(d+i)(o_i+1)}'$. It can be seen that $a^TP(o)$ can be obtained with $2d-1$ additions by using $a'$.
Similarly, we have $a^TQ_w(q)=\sum_{i=1}^{d}w_ia_{i(q_i+1)}'+\sum_{i=1}^{d}w_{i}a_{(d+i)(q_i+1)}'=\sum_{i=1}^{d}w_i(a_{i(q_i+1)}'+a_{(d+i)(q_i+1)}')$. Therefore, $a^TQ_w(q)$ can be obtained with $2d-1$ additions and $d$ multiplications by using $a'$.

\section{Conclusion}
This paper studies the fundamental problem of Nearest Neighbor Search (NNS) over the generalized weighted Manhattan distance ($d_w^{l_1}$). As far as we know, there is no prior work that solves the problem in sublinear time. In this paper, we first prove that there is no LSH or ALSH family for $d_w^{l_1}$ over the entire space $\mathbb{R}^d$. Then, we prove that there is still no LSH family suitable for $d_w^{l_1}$ over a bounded space. After that, we propose two ALSH families for $d_w^{l_1}$ over a bounded space. Based on these ALSH families, two ALSH schemes ($d_w^{l_1},l_2$)-ALSH and ($d_w^{l_1},\theta$)-ALSH are proposed for solving NNS over $d_w^{l_1}$ in sublinear time.

\bibliography{neurips_2021}
\bibliographystyle{plain}
%
%
%
%



%
%

\end{document}